%&latex
\documentclass[12pt]{article}
\usepackage{amsmath}
\usepackage{graphicx,psfrag,epsf}
\usepackage{enumerate}
\usepackage{natbib}
\usepackage{url} % not crucial - just used below for the URL 

%MY ADDED STUFF HAMED
\usepackage{algorithm}
\usepackage{algorithmic}
\usepackage{amssymb}
\usepackage{bm}

\newtheorem{Thm}{Theorem}
\newtheorem{lemma}{Lemma}
\newtheorem{proof}{Proof}

%%MY ADDED STUFF HAMED

%\pdfminorversion=4
% NOTE: To produce blinded version, replace "1" with "0" below.
\newcommand{\blind}{1}

% DON'T change margins - should be 1 inch all around.
\addtolength{\oddsidemargin}{-.5in}%
\addtolength{\evensidemargin}{-.5in}%
\addtolength{\textwidth}{1in}%
\addtolength{\textheight}{-.3in}%
\addtolength{\topmargin}{-.8in}%

\begin{document}

\def\spacingset#1{\renewcommand{\baselinestretch}%
{#1}\small\normalsize} \spacingset{1}

%%%%%%%%%%%%%%%%%%%%%%%%%%%%%%%%%%%%%%%%%%%%%%%%%%%%%%%%%%%%%%%%%%%%%%%%%%%%%%

\if1\blind
{
  \title{\bf Combining Forecasts Using Ensemble Learning}
  \author{Hamed Masnadi-Shirazi \hspace{.2cm}\\
    School of Electrical and Computer Engineering \\Shiraz University \\Shiraz, Iran}
  \maketitle
} \fi

\if0\blind
{
  \bigskip
  \bigskip
  \bigskip
  \begin{center}
    {\LARGE\bf Combining Forecasts Using Ensemble Learning}
\end{center}
  \medskip
} \fi

\bigskip
\begin{abstract}
The problem of combining individual forecasters to produce a forecaster with improved performance is considered. The connections between probability elicitation and classification are used to pose the combining forecaster problem as that of ensemble learning. With this connection in place, a number of theoretically sound ensemble learning methods such as Bagging and Boosting are adapted for combining forecasters. It is shown that the simple yet effective method of averaging the forecasts is equivalent to Bagging. This provides theoretical insight into why the well established averaging of forecasts method works so well. Also, a nonlinear combination of forecasters can be attained through Boosting which is shown to theoretically produce combined forecasters that are both calibrated and highly refined. Finally, the proposed methods of combining forecasters are applied to the Good Judgment Project data set and are shown to outperform the individual forecasters. 
\end{abstract}

\noindent%
{\it Keywords:}   Combining  Forecasters, Probability Elicitation, Ensemble Learning, Bagging, Boosting, Calibration, Refinement, Sharpness
\vfill

\newpage
\spacingset{1.45} % DON'T change the spacing!

\section{Introduction}
\label{intro}
%(SEE paper tr543.pdf for many of the refrences i its intro)
probability forecasting and probability elicitation \citep{Savage,DeGroot} are important in many fields such as weather forecasting where a forecaster provides the probability of rain on a certain day \citep{Tilmann2007,Brocker2009}
or in medicine where a physician might provide the probability of a patient surviving a certain amount of time \citep{Winkler}
or in economics where the probability of a certain economic variable is provided \citep{Croushore}
or in politics where an analyst might provide the probability of a certain political event such as wining the election \citep{GoodJudgmentProj}.

Improving a forecaster's performance is an obvious goal and there is strong empirical evidence that combining different forecasts  of the same phenomenon
can greatly improve performance \citep{Timmermann1, Clements}. Intuitively, this can be understood as combining the information available to all forecasters or averaging the results to form an improved forecast. There are two interconnected questions that remain. How do we exactly combine the different forecasts and why would combing forecasters theoretically guarantee improved performance given the method. 

Most methods of combining forecasters entail  a weighted linear combination of the individual forecasters. The simplest is the averaging method which reports the average of the multiple forecasts. While many successful applications of the the simple averaging method exists in  weather forecasting \citep{Baars}, medical forecasting
\citep{Winkler} and economics \citep{Graham}, there has been recent theoretical work that suggests that a linear combination of forecasters is suboptimal \citep{Ranjan}.
Nevertheless, many different methods have been proposed for improving the average method \citep{Satopaa1,Clements}.  
%There are other fundamentally different approaches to combining forecasters  based on modeling distributions( ???) \citep{a,b}. methods as well that are fundamentally different and are based on modeling distributions( ???) \citep{a,b}.

In this paper we take a different  novel approach . We first recount the connections between probability elicitation and  classification 
\citep{Buja,HamedNunoJMLRRegularize}. With this connection in place, we borrow the ensemble learning techniques for combining classifiers and use them for producing novel methods of combining forecasters. The goal in ensemble learning is to combine multiple classifiers to form an ensemble classifier with improved classification performance. Different methods such as Bagging \citep{Bagging}
and Boosting \citep{freund, friedman} have been proposed with strong theoretical guarantees. 

We adapt the Bagging and Boosting algorithms to the problem of  combining forecasters with a number of interesting consequences. 
First, we show that the adapted Bagging algorithm is equivalent to the well established method of averaging the forecasters. 
The interesting aspect of this result is that Bagging has a strong theoretical justification that can be readily applied to provide 
much needed insight into why the simple method of averaging the forecasts works so well in many cases.

Second, we show that the adapted Boosting algorithm leads to a novel nonlinear method for combining forecasters. 
We borrow the strong theoretical results that back the Boosting algorithm and establish a connection between the notions of evaluating a classifier through risk
minimization \citep{friedman, HamedNunoLossDesign} and evaluating a forecaster by calibration and refinement (sharpness) \citep{Savage,DeGroot}. Specifically, we show that a combined forecaster found using the  Boosting algorithm is both calibrated and highly refined. 

Finally, we build a combined forecaster for the Good Judgment Project \citep{GoodJudgmentProj} using the adapted Bagging and Boosting algorithms. The Good Judgment Project data set involves predicting the outcome  of a  variety of mostly economic and political questions.
We show that a combined forecaster using Boosting has improved performance and can correctly predict the outcome of $338$ questions with only $6$ errors.

\section{The Connection Between Probability Elicitation and Classification }
Given that ensemble methods are commonly used for classification, we first review the general connections between classification and probability elicitation.
% which in turn forms the foundation of forecasting.  

\subsection{The Classification Problem and Risk Minimization  }
\label{sec:DeriveTangentLoss}
Building  a classifier by way of minimizing a risk functional is a well established method ~\citep{friedman,zhang,Buja,HamedNunoLossDesign}.
We define a \emph{classifier} $h$  as a mapping from a feature vector ${\bf x} \in \cal X$ to a class 
label $y \in \{-1,1\}$. It is assumed that the feature vectors ${\bf x}$ are sampled from the probability distribution $P_{\bf X}({\bf x})$ and the class labels $y$  are sampled from the probability distribution  $P_{Y|X}(y|{\bf x})$. A classifier is constructed  by taking the sign of the classifier \emph{predictor} $p: {\cal X} \rightarrow \mathbb{R}$ and is written as 
\begin{equation}
  h({\bf x}) = sign[p({\bf x})].
  \label{eq:h}
\end{equation}

We define the \emph{risk} as 
\begin{equation}
  R(p) = E_{{\bf X},Y}[L(p({\bf x}),y)]
  \label{eq:risk}
\end{equation}
for a  non-negative loss function $L(p({\bf x}),y)$.
The optimal predictor $p^*({\bf x})$ is found by minimizing the risk or equivalently the conditional risk 
\begin{equation*}
  E_{Y|{\bf X}} [L(p({\bf x}),y)|{\bf X} = {\bf x}]
\end{equation*}
for all ${\bf x} \in {\cal X}$.

The predictor $p({\bf x})$ can itself be partitioned as
\begin{equation*}
  p({\bf x}) = f(\eta({\bf x}))
  \label{eq:compose}
\end{equation*}
into the  \emph{link function} $f: [0,1] \rightarrow \mathbb{R}$   and  the posterior probability function $\eta({\bf x}) = P_{Y|{\bf X}}(1|{\bf x})$.
This allows us to learn the optimal predictor  by first analytically finding the  link $f(\eta)$
and then estimating $\eta({\bf x})$. We say that a link function $f^*(\eta)$ is optimal if it is   a one-to-one function that implements the Bayes decision rule as 
%For example, it can be shown that  any $f^*$ such that
\begin{equation}
  \left\{
  \begin{array}{cc}
    f^*(\eta) > 0 & \mbox{if $\eta > \frac{1}{2} $} \\
    f^*(\eta) = 0 & \mbox{if $\eta =  \frac{1}{2} $} \\
    f^*(\eta) < 0 & \mbox{if $\eta <  \frac{1}{2} $.}
  \end{array}
  \right.
  \label{eq:Bayesnec}
\end{equation}
It can be shown that   $f^*$ 
minimizes the    conditional zero-one risk of
\begin{eqnarray}
  C_{0/1}(\eta,p) = \eta \frac{1- sign(p)}{2} + 
  (1-\eta) \frac{1 + sign(p)}{2}
    = \left\{ \begin{array}{ll}
         1-\eta, & \mbox{if $p=f(\eta) \geq 0 $};\\
        \eta, & \mbox{if $p=f(\eta)<0$} \end{array} \right.
\end{eqnarray}
 associated with the zero-one loss of
\begin{eqnarray*}
  L_{0/1}(y,p) = \frac{1- sign(yp)}{2}  = \left\{ \begin{array}{ll}
         0, & \mbox{if $y=sign(p)$};\\
        1, & \mbox{if $y \ne sign(p)$}.\end{array} \right.
\end{eqnarray*}
Two examples of such optimal link functions are
\begin{eqnarray}
\label{eq:fexamples}
&&f^*=\log\frac{\eta}{1-\eta} \\
&&f^*=2\eta-1. \nonumber  \\
%&&f^*=sign(2\eta-1). \nonumber
\end{eqnarray}

The resulting classifier written as
\begin{equation}
h^*({\bf x}) = sign[f^*(\eta({\bf x}))],
\label{eq:ResultClassifier}
\end{equation}
is equivalent to the optimal Bayes decision rule with associated   minimum zero-one conditional  risk 
\begin{eqnarray}
\label{eq:zeronemincondrisk}
  C^*_{0/1} (\eta) &&= \eta\left(\frac{1}{2}-\frac{1}{2}sign(2\eta-1)\right)+ 
           (1-\eta)\left(\frac{1}{2}+\frac{1}{2}sign(2\eta-1)\right) \\
					&&=\left\{ \begin{array}{ll}
         (1-\eta) & \mbox{if $\eta \geq \frac{1}{2} $}\\
        \eta & \mbox{if $\eta<\frac{1}{2}$}\end{array} \right. \\
           &&=\min\{\eta, 1-\eta\}
\end{eqnarray}
and minimum zero-one risk denoted by $R_{0/1}(p^*)$ which is  also known as the Bayes error ~\citep{JordanBartlett, zhang, book:ProbPatRec}.
%The optimal classifier that is found using the zero-one loss has the 
%smallest possible risk over all other  choices of  loss functions.
%smallest possible risk and is known as the Bayes error  $R^*$ of the corresponding classification problem 

Loss functions other than the zero-one loss can be used. \emph{Margin losses} are commonly used in classification algorithms and are in the form of 
\begin{eqnarray}
L_{\phi}(f,y)=\phi(yf).
\end{eqnarray}
Unlike the zero-one loss, they are usually differentiable and assign nonzero loss to correctly classified points.  
The conditional risk of a margin loss is
\begin{equation}
  C_\phi(\eta,p) = C_\phi(\eta,f(\eta)) = \eta \phi(f(\eta)) + 
  (1-\eta) \phi(-f(\eta)),
  \label{eq:CondRisk}
\end{equation}
and is minimized by the optimal link
\begin{equation}
  f^*_{\phi}(\eta) = \arg\min_{f} C_\phi(\eta,f).
  \label{eq:fstarphi}
\end{equation}
The resulting minimum conditional risk function is
\begin{equation}
  C^*_\phi(\eta) = C_\phi(\eta,f^*_\phi).
  \label{eq:C*phi}
\end{equation}

The optimal link  can be found analytically for most margin losses.
In such cases, the optimal predictor of minimum risk is $p^* = f^*_\phi(\eta)$ and the  minimum risk is 
\begin{equation}
\label{equ:MinRiskInit}
R_{\phi}({p^*}) = \int_{\bf x} P_{{\bf X}}({\bf x}) \left[ P_{{\bf Y}|{\bf X}}(1|{\bf x})\phi({ p^*}({\bf x})) + P_{{\bf Y}|{\bf X}}(-1|{\bf x})\phi(-{ p^*}({\bf x}))  \right] d{\bf x}.
\end{equation}
If $f^*_\phi(\eta)$ is  invertible then we have a \emph{proper loss} and the posterior probabilities $\eta$ can be found using 
\begin{equation}
  \eta({\bf x}) = [f^*_\phi]^{-1}( p^*({\bf x})).
  \label{eq:link}
\end{equation} 
Additionally,  $C_\phi^*(\eta)$ is concave \citep{zhang} and 
\begin{eqnarray}
  C_\phi^*(\eta) &=&   C_\phi^*(1-\eta) \label{eq:Cstarsym}.
 % {[f_\phi^*]}^{-1}(-v) &=& 1 - [f_\phi^*]^{-1}(v) 
\label{eq:fstarsym}
\end{eqnarray}
%We also require  that $C^*_\phi(0) = C^*_\phi(1) = 0$ so that the 
%minimum risk is zero when  $P_{Y|{\bf X}}(1|{\bf x}) = 0$ or
%$P_{Y|{\bf X}}(1|{\bf x}) = 1$.

For example the \emph{exponential loss} is 
\begin{equation}
\phi(yp({\bf x}))=e^{-yp({\bf x})}
\end{equation}
with optimal link
\begin{equation}
f^*_{\phi}(\eta)=\frac{1}{2}\log(\frac{\eta}{1-\eta}),
\end{equation}
invertable inverse link
\begin{equation}
(f^*)^{-1}(v)=\frac{e^{2v}}{1+e^{2v}},
\end{equation}
and concave minimum conditional risk 
\begin{equation}
C^*_{\phi}(\eta)=2\sqrt{\eta(1-\eta)}.
\end{equation}

In practice, the empirical risk 
\begin{equation}
  R_{emp}(p) = \frac{1}{n} \sum_i L(p({\bf x}_i),y_i)
  \label{eq:emprisk}
\end{equation}
is minimized to find an estimate of the optimal predictor ${\hat p}^*({\bf x})$
over  a training set ${ D} = \{({\bf x}_1,y_1), \ldots, ({\bf x}_n,y_n)\}$.
Finally,   the  probabilities $\hat \eta({\bf x})$ are now estimated  from  ${\hat p}^*$ as
\begin{equation}
  {\hat \eta}({\bf x}) = [f^*_\phi]^{-1}({\hat p^*}({\bf x})).
  \label{eq:hateta}
\end{equation}

A predictor is denoted \emph{calibrated} \citep{DeGroot, Platt, Caruana, Raftery} if it is optimal, i.e. minimizes the risk of (\ref{eq:risk}), and
an optimal link function exists such that 
\begin{eqnarray}
{\hat \eta}({\bf x}) = { \eta}({\bf x}).
%{\hat \eta}({\bf x}) = [f^*_\phi]^{-1}({ p^*}({\bf x})) = { \eta}({\bf x}).
\end{eqnarray}

\subsubsection{Proper Losses and Probability Elicitation }
Risk minimization is closely related to  probability elicitation  ~\citep{Savage,DeGroot}. This connection has  been studied in ~\citep{Buja,HamedNunoLossDesign,Reid}.
In  probability elicitation, we assume that a forecaster makes a forecast ${\hat \eta}$ that an 
event $y=1$ will happen. The forecaster can be dishonest and not produce the true chance of event $y=1$.
The true chance of event $y=1$ when a forecast of  ${\hat \eta}$ is made is  denoted by $\eta$ and can be found as
\begin{eqnarray}
%\eta({\bf x})&=&\frac{\mbox{Number of times event}~ y=1 ~\mbox{happened when conditions were equal to}~ {\bf x} }{ \mbox{Total number of times that conditions were equal to} ~{\bf x} } \\
\eta&=&\frac{\mbox{Number of times the forecaster predicted}~ {\hat \eta} ~\mbox{and event}~ y=1 ~\mbox{happened} }{ \mbox{Total number of times that  the forecaster predicted} ~{\hat \eta}  } \\
&=& P(y=1|{\hat \eta}) 
%&=& P(y=1|{\bf x}).
\end{eqnarray}
If the forecaster is honest and always reports ${\hat \eta}=\eta$ then we say the forecaster is \emph{calibrated}.
 
%In our probability elicitation model, we assume that a forecaster observes the data or conditions ${\bf x}$ and makes a prediction ${\hat \eta}({\bf x})$ that an 
%event $y=1$ will happen. The forecaster can be dishonest and not produce the true chance of event $y=1$, but it is assumed to be \emph{consistent} \citep{cox, janes} meaning that it always makes the prediction ${\hat \eta}({\bf x})$ when  ${\bf x}$ is observed. In other words, ${\hat \eta}$ is a function of ${\bf x}$.
%The true chance of event $y=1$ given the observed data ${\bf x}$ is denoted by $\eta({\bf x})$ and can be found as
%\begin{eqnarray}
%\eta({\bf x})&=&\frac{\mbox{Number of times event}~ y=1 ~\mbox{happened when conditions were equal to}~ {\bf x} }{ \mbox{Total number of times that conditions were equal to} ~{\bf x} } \\
%&=& \frac{\mbox{Number of times the forecaster predicted}~ {\hat \eta} ~\mbox{and event}~ y=1 ~\mbox{happened} }{ \mbox{Total number of times that  the forecaster predicted} ~{\hat \eta}  } \\
%&=& P(y=1|{\hat \eta}({\bf x})) \\
%&=& P(y=1|{\bf x}).
%\end{eqnarray}
%If the forecaster is honest and always reports ${\hat \eta}({\bf x})=\eta(\bf x)$ for all ${\bf x}$ then we say the forecaster is \emph{calibrated}.

The question of evaluating a forecaster can be answered by considering \emph{proper scoring functions}.  
A score function  assigns a score of $I_1({\hat \eta})$   to prediction ${\hat \eta}$ 
when event $y=1$ holds and a score of
$I_{-1}({\hat \eta})$ to prediction ${\hat \eta}$ when $y=-1$ holds. The scoring function is said to be proper if 
$I_1$ and $I_{-1}$ are such that the 
expected score for a given ${\hat \eta}$
\begin{equation}
  I(\eta,{\hat \eta}) = \eta I_{1}({\hat \eta}) + (1-\eta) I_{-1}({\hat \eta}),
  \label{eq:expreward}
\end{equation}
is maximal when ${\hat \eta} = \eta$, in other words
\begin{equation}
  I(\eta,{\hat \eta}) \leq I(\eta, \eta) = J(\eta), \,\,\, \forall \eta
  \label{eq:Savagebound}
\end{equation}
with equality if and only if ${\hat \eta} = \eta$. 
In other words, the score is maximized only if the forecaster is always honest.

The following theorem provides a means of generating proper scoring functions.
\begin{Thm}{~\citep{Savage}}
  Let $I(\eta,{\hat \eta})$ be as defined 
  in~(\ref{eq:expreward}) and $J(\eta) = I(\eta,\eta)$.
  Then~(\ref{eq:Savagebound}) holds if and only if $J(\eta)$ is convex and 
  \begin{equation}
    \label{eq:Is}
    I_1(\eta) = J(\eta) + (1-\eta) J^\prime(\eta) \quad \quad \quad
    I_{-1}(\eta) = J(\eta) -\eta J^\prime(\eta).
  \end{equation}
  \label{thm:savage}
\end{Thm}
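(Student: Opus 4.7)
The plan is to prove the equivalence in two directions and in both cases treat $J(\eta)=I(\eta,\eta)=\eta I_1(\eta)+(1-\eta)I_{-1}(\eta)$ as the key algebraic identity to manipulate.

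For the necessity direction, I assume (\ref{eq:Savagebound}) and try to recover the formulas (\ref{eq:Is}) for $I_1$ and $I_{-1}$. Since $\hat\eta=\eta$ maximizes the smooth function $\hat\eta\mapsto I(\eta,\hat\eta)=\eta I_1(\hat\eta)+(1-\eta)I_{-1}(\hat\eta)$, the first-order condition gives $\eta I_1'(\eta)+(1-\eta)I_{-1}'(\eta)=0$. Differentiating $J(\eta)=\eta I_1(\eta)+(1-\eta)I_{-1}(\eta)$ directly and substituting this identity yields $J'(\eta)=I_1(\eta)-I_{-1}(\eta)$. Combining this with the defining identity for $J$ gives a $2\times 2$ linear system in $I_1(\eta)$ and $I_{-1}(\eta)$ whose solution is exactly (\ref{eq:Is}). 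Convexity of $J$ comes for free from writing $J(\eta)=\sup_{\hat\eta}I(\eta,\hat\eta)$: for each fixed $\hat\eta$, the map $\eta\mapsto I(\eta,\hat\eta)$ is affine in $\eta$, so $J$ is the pointwise supremum of affine functions and hence convex.

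For the sufficiency direction, I substitute (\ref{eq:Is}) into the expected score and simplify:
\begin{equation*}
I(\eta,\hat\eta)=\eta\bigl[J(\hat\eta)+(1-\hat\eta)J'(\hat\eta)\bigr]+(1-\eta)\bigl[J(\hat\eta)-\hat\eta J'(\hat\eta)\bigr]=J(\hat\eta)+(\eta-\hat\eta)J'(\hat\eta).
\end{equation*}
Consequently
\begin{equation*}
J(\eta)-I(\eta,\hat\eta)=J(\eta)-J(\hat\eta)-J'(\hat\eta)(\eta-\hat\eta),
\end{equation*}
which is precisely the Bregman divergence associated with $J$. By the supporting-hyperplane characterization of convex functions, this quantity is nonnegative, establishing (\ref{eq:Savagebound}); it vanishes whenever $\hat\eta=\eta$, giving $I(\eta,\eta)=J(\eta)$.

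The main subtlety lies in the equality clause of (\ref{eq:Savagebound}). The Bregman divergence of a merely convex function can vanish off the diagonal on any interval where $J$ is affine, so to obtain \emph{if and only if} I must invoke strict convexity of $J$ (which is the standard reading of Savage's theorem, guaranteeing a unique maximizer). A secondary technical point is differentiability: the statement uses $J'$ explicitly, so I will assume $J$ is differentiable; otherwise one replaces $J'(\hat\eta)$ with an element of the subdifferential $\partial J(\hat\eta)$ and the same Bregman argument goes through. With these conventions in place the two directions close the equivalence.
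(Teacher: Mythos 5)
The paper states this theorem as a cited result from Savage and gives no proof of its own, so there is nothing internal to compare against; your argument is the standard proof of Savage's representation theorem and it is correct. Both directions are sound: the first-order condition plus differentiation of $J(\eta)=\eta I_1(\eta)+(1-\eta)I_{-1}(\eta)$ yields the $2\times 2$ system whose solution is (\ref{eq:Is}), convexity follows from $J$ being a pointwise supremum of affine functions, and the converse reduces to nonnegativity of the Bregman divergence. Your caveat about the equality clause is a genuine and well-known subtlety --- the ``if and only if $\hat\eta=\eta$'' part of (\ref{eq:Savagebound}) does require strict convexity of $J$ (equivalently, that the score be \emph{strictly} proper), a hypothesis the paper's statement omits; flagging it, along with the differentiability convention, is appropriate rather than a gap.
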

Not surprisingly, proper losses are related to  probability elicitation and proper scoring functions by the following theorem.
\begin{Thm}{~\citep{HamedNunoLossDesign}}
\label{Thm:HamedNuno} Let $I_1(\cdot)$ and 
  $I_{-1}(\cdot)$ be as in (\ref{eq:Is}),
  for any continuously differentiable convex $J(\eta)$ such that
  $J(\eta) = J(1-\eta)$, and $f(\eta)$ any invertible function such that
  $f^{-1}(-v) = 1 -  f^{-1}(v)$.
  Then
  \begin{equation*}
    I_1(\eta) = -\phi(f(\eta)) \quad \quad \quad \quad \quad \quad 
    I_{-1}(\eta) = -\phi(-f(\eta)) \label{eq:I1I-1f}
  \end{equation*}
  if and only if 
  \begin{equation*}
%    \phi(v) =  -J[f^{-1}(v)] - (1- f^{-1}(v)) J^\prime[f^{-1}(v)].
    \phi(v) =  -J\left(f^{-1}(v)\right) - (1- f^{-1}(v)) J^\prime 
    \left(f^{-1}(v)\right).
    \label{eq:phieq}
  \end{equation*}
  \label{thm:risk}
\end{Thm}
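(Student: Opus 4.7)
The plan is to prove the biconditional by direct substitution, using Savage's formulas (\ref{eq:Is}) as the bridge. In one direction I assume $I_1(\eta)=-\phi(f(\eta))$ and $I_{-1}(\eta)=-\phi(-f(\eta))$, and derive the formula for $\phi$. By (\ref{eq:Is}), $I_1(\eta)=J(\eta)+(1-\eta)J'(\eta)$, so $\phi(f(\eta))=-J(\eta)-(1-\eta)J'(\eta)$. Since $f$ is invertible, setting $v=f(\eta)$ (equivalently $\eta=f^{-1}(v)$) immediately yields $\phi(v)=-J(f^{-1}(v))-(1-f^{-1}(v))J'(f^{-1}(v))$, which is the claimed expression. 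This direction uses only the invertibility of $f$ and needs neither symmetry hypothesis.

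For the converse I start from the stated formula for $\phi$ and verify both identities. The identity for $I_1$ is immediate: substituting $v=f(\eta)$ into the formula gives $-\phi(f(\eta))=J(\eta)+(1-\eta)J'(\eta)=I_1(\eta)$ by Savage. The identity for $I_{-1}$ is where the two symmetry hypotheses come into play. Plugging $v=-f(\eta)$ into the formula produces
\begin{equation*}
-\phi(-f(\eta)) = J\bigl(f^{-1}(-f(\eta))\bigr) + \bigl(1-f^{-1}(-f(\eta))\bigr)\, J'\bigl(f^{-1}(-f(\eta))\bigr),
\end{equation*}
and the hypothesis $f^{-1}(-v)=1-f^{-1}(v)$ collapses the argument $f^{-1}(-f(\eta))$ to $1-\eta$, giving $J(1-\eta)+\eta J'(1-\eta)$. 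Finally, differentiating the relation $J(\eta)=J(1-\eta)$ yields $J'(1-\eta)=-J'(\eta)$, while the same relation gives $J(1-\eta)=J(\eta)$, so the expression reduces to $J(\eta)-\eta J'(\eta)=I_{-1}(\eta)$ by Savage, as desired.

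The step I expect to be the only real obstacle is the careful bookkeeping in the $I_{-1}$ case: one must apply the link symmetry $f^{-1}(-v)=1-f^{-1}(v)$ inside both $J$ and $J'$, and then separately apply the scoring symmetry $J(\eta)=J(1-\eta)$ together with its derivative consequence $J'(1-\eta)=-J'(\eta)$. The two symmetry conditions are precisely what make the coefficient of $J'$ flip sign from $+\eta$ to $-\eta$ and the $J$ term pass through unchanged. Aside from this, the argument is essentially a change of variables $v\leftrightarrow\eta$ through the link $f$, so no further machinery beyond Theorem~\ref{thm:savage} is required.
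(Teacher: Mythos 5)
Your proof is correct, and it is essentially the standard argument: the paper itself states this theorem as a citation to \citep{HamedNunoLossDesign} without reproducing a proof, and your verification (change of variables $v=f(\eta)$ for the forward direction; the link symmetry $f^{-1}(-v)=1-f^{-1}(v)$ plus $J(\eta)=J(1-\eta)$ and its consequence $J'(1-\eta)=-J'(\eta)$ for the $I_{-1}$ identity in the converse) is exactly the calculation in the cited source. No gaps.
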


%It is shown in \citep{zhang}  that $C_\phi^*(\eta)$ is concave  and that
%\begin{eqnarray}
  %C_\phi^*(\eta) &=&   C_\phi^*(1-\eta) \label{eq:Cstarsym}\\
  %{[f_\phi^*]}^{-1}(-v) &=& 1 - [f_\phi^*]^{-1}(v) \label{eq:fstarsym}.
%\end{eqnarray}
%We also require  that $C^*_\phi(0) = C^*_\phi(1) = 0$ so that the 
%minimum risk is zero when  $P_{Y|{\bf X}}(1|{\bf x}) = 0$ or
%$P_{Y|{\bf X}}(1|{\bf x}) = 1$.

This means that for any continuously differentiable $J(\eta) = -C_\phi^*(\eta)$ and invertible 
$ f^*_\phi(\eta)$,  the conditions of  Theorem \ref{Thm:HamedNuno} are satisfied and so the loss is 
\begin{equation}
%  \phi(v) =  C_\phi^*\{[f_\phi^*]^{-1}(v)\} + (1- [f_\phi^*]^{-1}(v)) 
%  [C_\phi^*]^\prime\{[f_\phi^*]^{-1}(v)\}.
  \phi(v) =  C_\phi^*\left([f_\phi^*]^{-1}(v)\right) + (1- [f_\phi^*]^{-1}(v)) 
  [C_\phi^*]^\prime\left([f_\phi^*]^{-1}(v)\right)
  \label{eq:phieq2}
\end{equation}
 and $I(\eta,{\hat \eta}) = -C_\phi(\eta,f)$.

The expected score of the forecaster over  $\hat \eta$ and $y$ is denoted by $S_{I_y}$ and can be written as
\begin{eqnarray}
S_{I_y}&&=\int_{\hat \eta} P_{\hat H}(\hat \eta) \sum_y P(y|\hat \eta)I_{y}(\hat \eta) d(\hat \eta)  \\
&&=\int_{\hat \eta} P_{\hat H}(\hat \eta) \Bigl( P(1|\hat \eta)I_1(\hat \eta) + P(-1|\hat \eta)I_{-1}(\hat \eta))\Bigr) d(\hat \eta). \nonumber \\
&&= \int_{\hat \eta} P_{\hat H}(\hat \eta)[\eta I_1(\hat \eta)+(1-\eta)I_{-1}(\hat \eta)] d(\hat \eta), \nonumber
\end{eqnarray}
where $P_{\hat H}(\hat \eta)$ is the distribution of $\hat \eta$.

We can dissect the expected score $S_{I_y}$ of a forecaster
into two parts of $S_{Calibration}$ and $S_{Refinement}$ that are measures of \emph{calibration} and \emph{refinement} (also known as \emph{sharpness}) \citep{DeGroot}.
This is accomplished by adding and subtracting 
$P_{\hat H}(\hat \eta)[P(1|\hat \eta)I_1(\eta) + P(-1|\hat \eta)I_{-1}(\eta)]$ as 
\begin{eqnarray}
&& S_{I_y}=\int_{\hat \eta} P_{\hat H}(\hat \eta) \sum_y P(y|\hat \eta)I_{y}(\hat \eta) d(\hat \eta)  \\
&&=\int_{\hat \eta} P_{\hat H}(\hat \eta) \Bigl( P(1|\hat \eta)I_1(\hat \eta) + P(-1|\hat \eta)I_{-1}(\hat \eta))\Bigr) d(\hat \eta) \nonumber \\ 
&&= \int_{\hat \eta} P_{\hat H}(\hat \eta) \Bigl[ P(1|\hat \eta)\Bigl\{I_1(\hat \eta)-I_1(\eta)\Bigr\} +  
P(-1|\hat \eta)\Bigl\{I_{-1}(\hat \eta)-I_{-1}(\eta) \Bigr\} \Bigr] d(\hat \eta)  \nonumber  \\ 
&& + \int_{\hat \eta} P_{\hat H}(\hat \eta) \Bigl[P(1|\hat \eta)I_1(\eta) + P(-1|\hat \eta)I_{-1}(\eta)\Bigr] d(\hat \eta) \nonumber \\
&&=S_{Calibration}+S_{Refinement}. \nonumber
\end{eqnarray}
Note that  $S_{Calibration}$ has a maximum equal to zero when the forecaster is calibrated ($\hat \eta=\eta$) and is negative otherwise since 
$I(\hat \eta,\eta) \le I(\eta,\eta)=J(\eta)$.

Also, for a calibrated forecaster, the $S_{Refinement}$ term can be simplified to 
\begin{eqnarray}
\label{eq:RefOrig}
S_{Refinement}\!\!\!\!\!\!&&=\int_{ \eta} P_{ H}( \eta) \Bigl[P(1| \eta)I_1(\eta) + P(-1| \eta)I_{-1}(\eta)\Bigr] d( \eta) \\
&&=\int_{ \eta} P_{ H}( \eta) \Bigl[\eta I_1(\eta) + (1-\eta) I_{-1}(\eta)\Bigr] d( \eta) \nonumber \\
&&=\int_{ \eta} P_{ H}( \eta) J(\eta) d( \eta). \nonumber  \\
%S_{Refinement}\!\!\!\!\!\!&&=\int_{\hat \eta} P_{\hat H}(\hat \eta) \Bigl[P(1|\hat \eta)I_1(\eta) + P(-1|\hat \eta)I_{-1}(\eta)\Bigr] d(\hat \eta) \\
%&&=\int_{\hat \eta} P_{\hat H}(\hat \eta) \Bigl[\eta I_1(\eta) + (1-\eta) I_{-1}(\eta)\Bigr] d(\hat \eta) \nonumber \\
%&&=\int_{\hat \eta} P_{\hat H}(\hat \eta) J(\eta) d(\hat \eta). \nonumber  \\
\end{eqnarray}
Recall that $J(\eta)$ is  a convex function of   $ \eta$ over the $[0~1]$ interval with maximum values at $1$ and $0$. 
This means that if the distribution of the forecasters predictions $P_{ H}( \eta)$ is  more concentrated around $0$ and $1$, thus increasing 
the $P_{ H}( \eta)J(\eta)$ term, then $S_{Refinement}$ will increase and the forecaster is said to be more refined \citep{DeGroot}.

Finally, if the forecaster is calibrated then $S_{Refinement}$ can be written as
\begin{eqnarray}
\label{eq:RefOrigCali}
S_{Refinement}=\int_{ \eta} P_{ H}( \eta)J(\eta) d( \eta)  =E_{H}[J(\eta)]. 
\end{eqnarray} 
In summary, a calibrated and highly refined forecaster is ideal in the sense that it only makes honest definite forecasts of $\eta=0$ or  $\eta=1$ and is always
correct.

\section{Ensemble Learning Methods }
%describe Bagging algorithm, adaboost and realboost algorithms.
The idea of ensemble learning is to combine multiple classifier predictors to form an ensemble predictor with improved performance. 
The difference between  ensemble learning algorithms is in the assumptions they make about how the original multiple predictors 
vary. This in turn, dictates how they should be combined to complement each other and form an improved ensemble predictor. 
In this section we review two general ensemble algorithms namely, Bagging and Boosting. We also specifically consider the  AdaBoost and RealBoost 
algorithms which are used in the experiments section. 

\subsection{Bagging}
In Bagging, we  assume that the multiple predictors %$p_1({\bf x}), ..., p_N({\bf x})$ 
vary in terms of the training data sets 
${D} = \{({\bf x}_1,y_1), \ldots, ({\bf x}_n,y_n)\} \in \cal D$ from which they were trained. We also assume that the data sets are sampled from an 
underlying distribution $P_{{\bf D}}(D)$.  To emphasis this point we can write a  predictor as $p({\bf x}; D)$. The ensemble predictor
is taken to be the average of the different predictors over the  training data set distribution written as
\begin{equation}
p_e({\bf x}; {\cal D})=E_{{\bf D}}[p({\bf x}; D)].
\end{equation}
It can be shown %in Section 4 of 
in \citep{Bagging},  that  the  error of the ensemble predictor written as
\begin{equation}
E_e=E_{Y,{\bf X}}[(y-p_e({\bf x}; {\cal D}))^2],
\end{equation}
is less than the average error of the multiple predictors  written as
\begin{equation}
E=E_{{\bf D}}[ E_{Y,{\bf X}}[(y-p({\bf x}; D))^2] ],
\end{equation}
where $y$ is the true numerical prediction.
In other words, the ensemble predictor has on average, less error than the multiple predictors and is thus a more reliable predictor.

\subsection{Boosting}
In Boosting we assume that a set of multiple classifier predictors exists in the form of ${p({\bf x}^{(1)}), ..., p({\bf x}^{(N)})}$ that vary
in terms of the feature vectors that they use, namely ${\bf x}^{(i)} \in {\cal X}^{(i)}$.  The feature
vectors ${\bf x}^{(i)}$ are sampled from their associated probability distributions $P_{{\bf X}^{(i)}}({\bf x}^{(i)})$.  We also assume that a single training set
${D} = \{({\bf x}_1,y_1), \ldots, ({\bf x}_n,y_n)\} $ exists and ${\bf x}_i^{(j)}$ is the $j$th feature of the $i$th training data point. 
The Boosting ensemble predictor is an additive model in the form of 
\begin{equation}
p_e(\{ {\bf x}^{(1)}, ..., {\bf x}^{(N)} \})=\sum_{i=1}^M \alpha^{(i)} \cdot p({\bf x}^{(i)}).
\end{equation}
The $p({\bf x}^{(i)})$ and $\alpha^{(i)}$  are sequentially added to the model so as  to minimize the risk of the updated ensemble. In other words, Boosting 
adds the best predictors and features to the model such that they compliment each other and improve the resulting ensemble predictors performance.

Boosting algorithms generally differ based on the specific loss function used to compute the risk or the type of predictors $p({\bf x}^{(i)})$ used or the method used to
find the best update at each iteration. 
Here we  consider two commonly used boosting algorithms namely the AdaBoost and RealBoost algorithms. Both algorithms are based on the exponential loss and
find the update through a gradient descent method. The main difference between the two algorithms is in how the predictors are used.
% in th that the predictor in AdaBoost is $p({\bf x}^{(i)}) \in \{1, -1\}$ while the predictor in RealBoost is $p({\bf x}^{(i)})  \in \mathbb{R}$. 
The details of the AdaBoost and RealBoost algorithms are presented in Algorithm-\ref{algo:AdaBoost} and Algorithm-\ref{algo:RealBoost} respectively.

\begin{algorithm}[tb]
%\singlespacing % force single spacing in DRAFT mode
\caption{AdaBoost} \label{algo:AdaBoost}
{\small
\begin{algorithmic}
\STATE {\bf Input:} Training set ${D} = \{({\bf x}_1,y_1), \ldots, ({\bf x}_n,y_n)\}$, 
$y_i \in \{1,-1\}$ is the class
label of example ${\bf x}_i$,  a set of classifier predictors ${p({\bf x}^{(1)}), ..., p({\bf x}^{(N)})}$ where $p({\bf x}^{(j)}) \in \{1,-1\}$
and  $M$ number of predictors in
the final decision rule. 
\STATE {\bf Initialization:} Set $w_i=\frac{1}{n}$.
%\begin{displaymath}
%   p(x_i)=\frac{1}{2}  \hspace{2 mm} \forall x_i
%  %w_{i} = \frac{1}{2|{\cal I}_+|}, \forall i \in {\cal I}_+, \hspace{1in}
%  %w_{i} = \frac{1}{2|{\cal I}_-|}, \forall i \in {\cal I}_-.
%\end{displaymath}
%Set $f^{(1)}({\bf x}) = 0$.
\FOR{$m=1:M$} 
\STATE Find best update as  
\begin{eqnarray*}
p({\bf x}^{(m)})=\arg\min_{j} \sum_{i=1}^{n} w_i \cdot I[y_i \ne p({\bf x}_i^{(j)})],
\end{eqnarray*}
where $I$ is the identity function.
\STATE Compute
\begin{eqnarray*}
\epsilon=\frac{\sum_{i=1}^{n} w_i \cdot I[y_i \ne p({\bf x}_i^{(m)})]}{\sum_{i=1}^{n} w_i}  
\end{eqnarray*}
and
\begin{eqnarray*}
\alpha^{(m)}=\frac{1}{2}\log \left(\frac{1-\epsilon}{\epsilon} \right).
\end{eqnarray*}
\STATE Change $w_i$ as
\begin{eqnarray*}
w_i=w_i \cdot e^{ \alpha^{(m)}  I[y_i \ne p({\bf x}_i^{(m)})] }
\end{eqnarray*}
\ENDFOR 
\STATE {\bf Output:} ensemble predictor  $\sum_{m=1}^M \alpha^{(m)} \cdot p({\bf x}^{(m)})$.
\end{algorithmic}
}
\normalsize
\end{algorithm}

\begin{algorithm}[tb]
%\singlespacing % force single spacing in DRAFT mode
\caption{RealBoost} \label{algo:RealBoost}
{\small
\begin{algorithmic}
\STATE {\bf Input:} Training set ${D} = \{({\bf x}_1,y_1), \ldots, ({\bf x}_n,y_n)\}$, 
$y_i \in \{1,-1\}$ is the class
label of example ${\bf x}_i$,  a set of classifier predictors ${p({\bf x}^{(1)}), ..., p({\bf x}^{(N)})}$ where $p({\bf x}^{(j)}) \in \mathbb{R}$
and  $M$ number of predictors in
the final decision rule. 
\STATE {\bf Initialization:} Set $w_i=\frac{1}{n}$.
%\begin{displaymath}
%   p(x_i)=\frac{1}{2}  \hspace{2 mm} \forall x_i
%  %w_{i} = \frac{1}{2|{\cal I}_+|}, \forall i \in {\cal I}_+, \hspace{1in}
%  %w_{i} = \frac{1}{2|{\cal I}_-|}, \forall i \in {\cal I}_-.
%\end{displaymath}
%Set $f^{(1)}({\bf x}) = 0$.
\FOR{$m=1:M$} 
\STATE Find best update as  
\begin{eqnarray*}
p({\bf x}^{(m)})=\arg\min_{j} \sum_{i=1}^{n} w_i \cdot e^{ -y_i  p({\bf x}_i^{(j)}) }.
\end{eqnarray*}
\STATE Change $w_i$ as
\begin{eqnarray*}
w_i=w_i \cdot e^{ -y_i  p({\bf x}_i^{(m)}) }
\end{eqnarray*}
and normalize $w_i$ such that 
\begin{eqnarray*}
\sum_{i=1}^{n} w_i=1.
\end{eqnarray*}
\ENDFOR 
\STATE {\bf Output:} ensemble predictor  $\sum_{m=1}^M  p({\bf x}^{(m)})$.
\end{algorithmic}
}
\normalsize
\end{algorithm}

%
%\begin{algorithm}[tb]
%{\small
%\caption{Kernel Two-Sample Hypothesis Testing Using One-Class SVM and Set-Kernels} 
%\label{alg:ALG1}
%\begin{algorithmic}
%\STATE {\bf Input:} Training samples  $X=\{{\bf x}_1, ..., {\bf x}_n \}$,  Set-Kernel function $K$ and one-class SVM parameter $\upsilon$.
%\STATE 1: Make training data $X_1$, $X_2$ ..., $X_l$ composed of nonempty subsets of $X$.
%\STATE 2: Solve dual problem 
%\begin{eqnarray}
%\!\!\!\!\!\!\!\!\!\!\!&&\arg \min_{\alpha_i} \frac{1}{2} \sum_{i=1}^l\sum_{j=1}^l \alpha_i \alpha_j K(X_i,X_j)  \nonumber \\
%\!\!\!\!\!\!\!\!\!\!\! && \mbox{s.t. }   0 \leq \alpha_i \leq \frac{1}{\upsilon l} \nonumber \\  
 %\!\!\!\!\!\!\!\!\!\!\!    && ~~~~~ \sum_{i=1}^l \alpha_i =1. \nonumber
%%&&  (w^Tx+b) \le -\frac{2C_1-1}{3C_1-2}+\xi_i ; y=-1 \\
%%&& \xi_i \geq 0
%\end{eqnarray} 
%for $\alpha_i$. 
%\STATE 3: Find one-class SVM bias parameter $\rho$.
%\STATE {\bf Output:} Decision rule $f(Y)=\mbox{sign}\left( \sum_{i=1}^l \alpha_i K(X_i, Y) - \rho \right)$.
%\end{algorithmic}
%}
%\normalsize
%\end{algorithm}

%optimal classifier is found by sequentially
%adding to the current solution, fk, an update
%h(x) = cgg(x), with cg R, g  G. The values of cg
%and g are chosen to minimize the risk of the updated predictor,
%fk+1(x) = fk(x) + h(x). Particular boosting algorithms
%differ in the loss functions adopted, the weak learner set G and the method used to compute the best update at each iteration.

\section{Combining Forecasters and Ensemble Learning }
In this section we make the connection between ensemble learning and combining forecasters. We start from the basic axioms of probability and construct the combining forecaster problem such that it can be readily viewed as an ensemble learning problem. 

The logical axioms of probability, namely that of \emph{consistency} \citep{book:Jaynes} , require that if two forecasters have the same information then it is only logical that they should provide the same  forecast. Different forecasters differ in their output because they  consider different information. Formally, we denote information to be ${{\bf I}}$ and assume that it is comprised of a vector of features ${\bf x \in \cal X}$ and an independently drawn training data set $D=\{ (y_1,{\bf x}_1), ..., (y_n,{\bf x}_n)  \} \in \cal D$. Specifically, the feature vectors and training data set are drawn from the independent probability distributions of $P_{\bf X}({\bf x})$ and $P_{\bf D}(D)$ respectively. A probability forecaster provides the probability of event $y$ given the information ${\bf I}$ and is  written as 
\begin{eqnarray}
\eta({\bf I})=\eta({\bf x}; D)=P(y|{\bf I}). 
\end{eqnarray}
A point forecaster  $p({\bf I})$ can now be treated as a classifier predictor and written as 
\begin{eqnarray}
p({\bf I})=f^*(\eta({\bf I}))=f^*(P(y|{\bf I})) 
\end{eqnarray}
for an appropriate link function such as those of (\ref{eq:fexamples}). 
Conversely, a point forecast can be transformed into a probability forecast by writing 
\begin{eqnarray}
[f^*]^{-1}(p({\bf I}))=\eta({\bf I}).
\end{eqnarray}

We say two forecasters have a type-I difference if they only differ in their training data set and a type-II difference if they only differ in their
feature vectors.  As an example of a type-II difference, consider the problem of forecasting the probability of rain.  The vector of features for one rain forecaster could be air pressure readings while another forecaster could use another feature such as air temperature readings. Obviously, these two forecasters will provide different probability of rain forecasts. As a type-I example, both forecasters could use the same feature vector in the form of air pressure readings but have access to different training data sets. For example the first forecaster could have access to the past $7$ day air pressure readings  while the second forecaster could have access to the past $365$ day air pressure readings. It is again obvious that these two forecasters will provide different probability of rain forecasts based on the different information they have. 

Under this model, the issue of having multiple forecasters that are all calibrated is now easily resolved. Different forecasters can provide different forecasts while still being calibrated because each forecaster is based on a different feature or training set with a different associated probability distribution. In other words, they can provide different forecasts while each still being calibrated with respect to their own different probability distributions. 

%Also, under this model, a probability forecaster $\eta({\bf I})$ is in fact providing the probability of event $y$ given the information ${\bf I}$ written as $P(y|{\bf %I})=\eta({\bf I})$ and a point forecaster  $p({\bf I})$ can now be treated as a classifier predictor and written as $f^*(P(y|{\bf I}))=f^*(\eta({\bf I}))=p({\bf I})$ 
%for an appropriate link function such as those of (\ref{eq:fexamples}). Conversely, a point forecast can be transformed into a probability forecast by writing
%$[f^*]^{-1}(p({\bf I}))=\eta({\bf I})$.
  
Under this model, combining forecasters can now be readily posed as an ensemble learning problem. Specifically, in ensemble learning a combination of multiple 
predictors are used to produce an ensemble predictor that has higher accuracy than any of the individual predictors. This is exactly the same goal of 
combining forecasters when the term predictor is  replaced  with the term forecaster. In other words, ensemble learning can be used to find an ensemble or combination of forecasters with improved performance.  With this connection  in place, we next show how different well established ensemble learning methods such as Bagging and Boosting can be converted for combining forecasters.

\section{Combining Forecasters Using Bagging}
%In Bagging, we are given a number of predictors $p({\bf x}, D_1), ..., p({\bf x}, D_N)$, each of which has been learned from a different training data set $D_1, ..., D_N$. The training data sets  $D_i=\{ (y_1,{\bf x}_1), ..., (y_n,{\bf x}_n)  \}$ are each comprised of $n$ independent observations from the same underlying distribution. 

In terms of combining forecasters, Bagging can be thought of as a method that deals  with type-I differences among  forecasters. Specifically, let $\eta({\bf x}; D_1)$ through $\eta({\bf x}; D_N)$ be $N$ probability forecasters that have  type-I differences based on the  different training data sets of $D_1$ through $D_N$. The training data sets  $D_i=\{ (y_1,{\bf x}_1), ..., (y_n,{\bf x}_n)  \}$ are each comprised of  independent observations from the same underlying distribution.  We want to somehow combine these probability forecasters to form an ensemble forecaster $\eta_e({\bf x}; {\cal D})$ such that the  mean squared error of the ensemble forecaster %$E_{\eta_e}$ 
is less than the average  mean squared error of the $N$ individual forecasters.
% denoted as $E_\eta$. 
Bagging suggests that this can be accomplished by simply averaging 
the $N$ probability forecasters as
\begin{equation}
\eta_e({\bf x}; {\cal D})=\frac{1}{N} \sum_i^N \eta({\bf x}; D_i).
\label{eq:AvgPropsForm}
\end{equation}  
%We refer the reader to Section-4 of \citep{Breiman} for the general proof. 
% as to why $E_{\eta_e} \le E_\eta$ in general. 

Similarly, point forecasters $p({\bf x}; D_1)$ through $p({\bf x}; D_N)$ that have type-I differences can also be combined to 
form an ensemble forecaster $\eta_e({\bf x}, {\cal D})$ as
\begin{equation}
p_e({\bf x}; {\cal D})=\frac{1}{N} \sum_i^N p({\bf x}; D_i).
\end{equation}  

It is interesting to note that equation (\ref{eq:AvgPropsForm}) is the well known method of combining probability forecasters 
by averaging their forecasts  \citep{Timmermann1, Clements}.
While there has been much debate regarding this  method \citep{Ranjan}, the above Bagging interpretation provides an elegant explanation of  why this simple method works so well in many cases.  

Finally, note that an ensemble classifier can be found from multiple probability forecasters using equation (\ref{eq:ResultClassifier})  as
\begin{equation}
h_e({\bf x}; {\cal D})=sign(f^*(\eta_e({\bf x}; \cal D))).
\end{equation}  
Similarly an ensemble classifier can also be found from multiple point forecasters as
\begin{equation}
h_e({\bf x}; {\cal D})=sign(p_e({\bf x}; \cal D)).
\end{equation}

\section{Combining Forecasters Using Boosting}
Boosting can be thought of as a method that deals  with type-II differences among  forecasters. Specifically, let $\eta({\bf x}^{(1)})$ 
through $\eta({\bf x}^{(N)})$
be $N$ probability forecasters that have  type-II differences based on the  different feature vectors of ${\bf x}^{(1)}$ through ${\bf x}^{(N)}$. We want to somehow combine these individual probability forecasters to form an ensemble forecaster $\eta_e(\{ {\bf x}^{(1)}, ..., {\bf x}^{(N)} \})$ such that the  risk associated with the ensemble forecaster  %$E_{\eta_e}$ 
is less than the risk associated with any of the $N$ individual forecasters. Boosting suggests that the ensemble probability forecaster has the form of
\begin{equation}
\eta_e(\{ {\bf x}^{(1)}, ..., {\bf x}^{(N)} \})= [f^*]^{-1}\left( \sum_{i=1}^M \alpha^{(i)} \cdot p({\bf x}^{(i)}) \right),
\label{eq:BoostForm}
\end{equation}  
where
\begin{equation}
p({\bf x}^{(i)})=f^*\left(\eta({\bf x}^{(i)})\right)
\end{equation}
is an associated point forecaster and the weights $\alpha^{(i)}$ are found from the specific Boosting algorithm that is being used. 
Also, the ensemble classifier can be found from multiple point forecasters and written as
\begin{equation}
h_e(\{ {\bf x}^{(1)}, ..., {\bf x}^{(N)} \})=  sign \left[ \sum_{i=1}^M \alpha^{(i)} \cdot p({\bf x}^{(i)}) \right].
\label{eq:BoostFormClassifier}
\end{equation}  

Note that, the ensemble forecaster in equation (\ref{eq:BoostForm}) is found by \emph{calibrating} the ensemble classifier  using an appropriate inverse link function 
 $[f^*]^{-1}$ as suggested in \citep{HamedNunoJMLRRegularize, Platt, Caruana, Raftery}. 

While Bagging combines the forecasters using a simple linear  combination of the individual forecasters,  Boosting does so using  a highly nonlinear combination of the  forecasters. This nonlinear combining of the forecasters allows the resulting ensemble forecaster to span regions of the space of forecaster functions that a simple linear combination of forecasters cannot possibly span. This in turn allows for a more flexible ensemble forecaster with potentially improved performance. This is further confirmed in the  experimental results. The nonlinear nature of the Boosting ensemble forecaster can thus allow it to escape many of the theoretical limitations  \citep{Ranjan} seen in the traditional methods of linear combination of forecasters.

\subsection{The Boosted  Forecaster is Both Calibrated and Highly Refined}
We have shown that combining forecasters can be posed as an ensemble learning problem and solved using the Boosting algorithm. The Boosting algorithm has been extensively studied and  theoretically justified  \citep{freund, friedman} and so there is no need to reiterate these theoretical arguments here for the special case of when Boosting is used for combining forecasters. Nevertheless, we can directly show that  Boosting  leads to a valid forecaster .
This means a forecaster that is both calibrated and has high refinement (sharpness).

We have seen that a predictor is denoted \emph{calibrated} 
%\cite{DeGroot, Platt, Caruana, Raftery}
if it is optimal, i.e. minimizes the risk of (\ref{eq:risk}),
and an optimal link  exists associated with a proper loss function such that
\begin{eqnarray}
\eta({\bf x})=(f^*)^{-1}(p^*({\bf x})).
\end{eqnarray} 
In a classification algorithm a proper loss function is usually fixed beforehand. In the case of AdaBoost and RealBoost this is the exponential loss of 
\begin{eqnarray}
L(p({\bf x}), y)=e^{-yp({\bf x})}
\end{eqnarray} 
which is verified to be a proper loss function \citep{zhang}  with an associated optimal link of 
\begin{eqnarray}
f^*(\eta)=\frac{1}{2}\log(\frac{\eta}{1-\eta}). 
\end{eqnarray}
This directly demonstrates that the predictor found from Boosting is theoretically calibrated. 
Other studies have confirmed that the Boosting algorithm can  produce calibrated predictors in practice as well \citep{HamedNunoJMLRRegularize}.

%Boosting is an algorithm that finds the predictor that minimizes the risk of a proper loss function. In other words, 
The risk of the ensemble predictor found from Boosting is less than the risk of the individual predictors by design. In what follows, we show that the refinement of the ensemble forecaster of Boosting is greater than the refinement of the individual forecasters. This means that the ensemble forecaster of Boosting is highly refined. To show this, we first establish a direct connection between 
minimum risk and refinement through the following  two lemmas.
\begin{lemma}
\label{thm:MinRiskTermForm}
Let $\phi$ be a proper loss function  and $p^*$ the optimal
predictor. The minimum risk $R_{\phi}(p^*)$ can be written in the form of
\begin{eqnarray}
\label{eq:MinRiskClean}
R_{\phi}({p^*})=\int_{\bf x} P_{{\bf X}}({\bf x})  C_\phi^*(\eta({\bf x} ) ) d{\bf x}.
\end{eqnarray}
\end{lemma}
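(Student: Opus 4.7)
The plan is to derive the claim directly from the definitions already established in the paper, essentially rewriting equation~(\ref{equ:MinRiskInit}) in terms of the minimum conditional risk. The only observation needed is that the bracketed integrand in~(\ref{equ:MinRiskInit}) is precisely $C_\phi(\eta({\bf x}), f^*_\phi(\eta({\bf x})))$, which by definition equals $C^*_\phi(\eta({\bf x}))$.

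First I would start from equation~(\ref{equ:MinRiskInit}),
\begin{equation*}
R_{\phi}({p^*}) = \int_{\bf x} P_{{\bf X}}({\bf x}) \left[ P_{Y|{\bf X}}(1|{\bf x})\phi(p^*({\bf x})) + P_{Y|{\bf X}}(-1|{\bf x})\phi(-p^*({\bf x})) \right] d{\bf x},
\end{equation*}
and substitute the identities $P_{Y|{\bf X}}(1|{\bf x}) = \eta({\bf x})$, $P_{Y|{\bf X}}(-1|{\bf x}) = 1 - \eta({\bf x})$, and $p^*({\bf x}) = f^*_\phi(\eta({\bf x}))$, where the latter holds because $\phi$ is proper and $p^*$ is optimal. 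This yields
\begin{equation*}
R_\phi(p^*) = \int_{\bf x} P_{\bf X}({\bf x}) \left[ \eta({\bf x}) \phi(f^*_\phi(\eta({\bf x}))) + (1-\eta({\bf x})) \phi(-f^*_\phi(\eta({\bf x}))) \right] d{\bf x}.
\end{equation*}

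Next I would recognize the bracketed expression as $C_\phi(\eta({\bf x}), f^*_\phi(\eta({\bf x})))$, using the definition of conditional risk in~(\ref{eq:CondRisk}) evaluated at the optimal link. By the definition of the minimum conditional risk in~(\ref{eq:C*phi}), this equals $C^*_\phi(\eta({\bf x}))$, and substituting gives the desired expression.

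There is no real obstacle here: the lemma is essentially a change of notation that absorbs the inner expectation over $Y$ into the minimum conditional risk function $C^*_\phi$. The only subtlety worth flagging explicitly is that using $p^* = f^*_\phi(\eta)$ pointwise requires $\phi$ to be proper, which is precisely the hypothesis of the lemma, guaranteeing that the minimizer of the conditional risk is attained and given by the optimal link $f^*_\phi$ as in~(\ref{eq:fstarphi}).
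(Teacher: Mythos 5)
Your proof is correct, but it takes a genuinely different and more direct route than the paper. The paper's proof goes through the probability-elicitation representation of Theorem \ref{Thm:HamedNuno}: it substitutes the expansion (\ref{eq:phieq2}) of $\phi(p^*)$ and $\phi(-p^*)$ in terms of $C_\phi^*$ and its derivative $[C_\phi^*]'$, and then cancels the derivative terms using the symmetry $C_\phi^*(\eta)=C_\phi^*(1-\eta)$ of (\ref{eq:Cstarsym}) and its differentiated form $[C_\phi^*]'(1-\eta)=-[C_\phi^*]'(\eta)$. You instead observe that the bracketed integrand in (\ref{equ:MinRiskInit}) is, after the substitution $p^*=f^*_\phi(\eta)$, exactly the conditional risk $C_\phi(\eta,f^*_\phi(\eta))$ of (\ref{eq:CondRisk}) evaluated at the minimizer (\ref{eq:fstarphi}), hence equal to $C_\phi^*(\eta)$ by the definition (\ref{eq:C*phi}). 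Your argument is shorter, needs neither the symmetry of $C_\phi^*$ nor the Savage-type representation, and makes transparent that the lemma is essentially a rewriting of the risk as the expectation of the minimum conditional risk; the paper's longer computation buys a consistency check against the proper-scoring-rule machinery ($I_1$, $I_{-1}$) that the rest of the paper builds on, which is presumably why the author routes the proof that way. You also correctly flag the one hypothesis that matters — properness guaranteeing that $p^*$ is pointwise given by the optimal link — so there is no gap.
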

\begin{proof}
From Theorem \ref{Thm:HamedNuno} and using equations (\ref{equ:MinRiskInit}) and $\eta({\bf x})=P_{Y|{\bf X}}(1|{\bf x})={[f_\phi^*]}^{-1}(p^*({\bf x})) $, we write
\begin{eqnarray}
&&R_{\phi}({p^*}) = \int_{\bf x} P_{{\bf X}}({\bf x}) \left[ P_{{\bf Y}|{\bf X}}(1|{\bf x})\phi({ p^*}({\bf x})) + P_{{\bf Y}|{\bf X}}(-1|{\bf x})\phi(-{ p^*}({\bf x}))  
\right] d{\bf x} \\
&=& \int_{\bf x} P_{{\bf X}}({\bf x}) [ \eta({\bf x}) C_\phi^*(\eta({\bf x}) ) + \eta({\bf x})(1-\eta({\bf x})) [C_\phi^*]^\prime\left(\eta({\bf x})\right) \\    
&+& (1-\eta({\bf x})) C_\phi^*((1-\eta({\bf x})) ) + (1-\eta({\bf x}))(\eta({\bf x})) [C_\phi^*]^\prime((1-\eta({\bf x}))) ] d{\bf x} \\
&=& \int_{\bf x} P_{{\bf X}}({\bf x}) [ \eta({\bf x}) C_\phi^*(\eta({\bf x}) ) + \eta({\bf x})(1-\eta({\bf x})) [C_\phi^*]^\prime\left(\eta({\bf x})\right) \\     
&+&  C_\phi^*(\eta({\bf x}) ) - \eta({\bf x}) C_\phi^*(\eta({\bf x}) ) - \eta({\bf x})(1-\eta({\bf x})) [C_\phi^*]^\prime\left(\eta({\bf x})\right)  ] d{\bf x} \\
&=& \int_{\bf x} P_{{\bf X}}({\bf x})  C_\phi^*(\eta({\bf x}) ) d{\bf x}.
%&=& \int_{\bf x} P_{{\bf X}}({\bf x})  C_\phi^*({[f_\phi^*]}^{-1}(p^*({\bf x})) ) d{\bf x} \\
%&=& \int_{\bf x} P_{{\bf X}}({\bf x})  C_\phi^*(P_{Y|{\bf X}}(1|{\bf x}) ) d{\bf x} \\
\end{eqnarray}
\end{proof}

\begin{lemma}
\label{thm:MinRiskequalRefinement}
Let $\phi$ be a proper loss function  and $p^*$ the optimal
predictor with associated  minimum risk $R_{\phi}(p^*)$. Then, the minimum risk  is equal to the negative refinement of the associated calibrated forecaster and we write
\begin{eqnarray}
R_{\phi}(p^*) = - S_{Refinement}.
\end{eqnarray}
\end{lemma}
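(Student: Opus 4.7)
The plan is to chain together Lemma 1 (already proven) with the simplified expression for $S_{Refinement}$ of a calibrated forecaster given in equation (\ref{eq:RefOrigCali}), using the identification $J(\eta) = -C_\phi^*(\eta)$ that was established in the discussion surrounding equation (\ref{eq:phieq2}).

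First I would write down the starting point from Lemma \ref{thm:MinRiskTermForm}:
\begin{equation*}
R_\phi(p^*) = \int_{\bf x} P_{\bf X}({\bf x}) \, C_\phi^*(\eta({\bf x})) \, d{\bf x}.
\end{equation*}
Next I would invoke the relationship $J(\eta) = -C_\phi^*(\eta)$ from Theorem \ref{Thm:HamedNuno} so as to rewrite the integrand with a sign flip. Then I would recall that since $p^*$ is the optimal predictor associated with the proper loss $\phi$, the forecaster $\hat\eta({\bf x}) = [f_\phi^*]^{-1}(p^*({\bf x}))$ coincides with $\eta({\bf x})$, so the forecaster is calibrated and the simplified formula (\ref{eq:RefOrigCali}) for $S_{Refinement}$ applies.

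The key step is a change of variables: the distribution $P_H(\eta)$ appearing in $S_{Refinement} = \int_\eta P_H(\eta) J(\eta) \, d\eta$ is by definition the pushforward under $\eta(\cdot)$ of the feature distribution $P_{\bf X}$. Applying the law of the unconscious statistician,
\begin{equation*}
E_H[J(\eta)] = \int_\eta P_H(\eta) J(\eta) \, d\eta = \int_{\bf x} P_{\bf X}({\bf x}) J(\eta({\bf x})) \, d{\bf x} = -\int_{\bf x} P_{\bf X}({\bf x}) C_\phi^*(\eta({\bf x})) \, d{\bf x}.
\end{equation*}
Comparing with the expression from Lemma \ref{thm:MinRiskTermForm} then yields $R_\phi(p^*) = -E_H[J(\eta)] = -S_{Refinement}$.

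The main obstacle, such as it is, is making explicit that calibration of $p^*$ is what lets us use the simplified refinement formula (\ref{eq:RefOrigCali}) rather than the general form, and that the measure $P_H$ is exactly the pushforward of $P_{\bf X}$ under $\eta(\cdot)$ so the change-of-variables step is justified without any Jacobian subtleties. Beyond that, the argument is essentially a one-line substitution of $J = -C_\phi^*$ into Lemma \ref{thm:MinRiskTermForm}.
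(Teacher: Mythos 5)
Your proposal is correct and follows essentially the same route as the paper: both start from Lemma \ref{thm:MinRiskTermForm}, substitute $J(\eta) = -C_\phi^*(\eta)$, and then identify $\int_{\bf x} P_{\bf X}({\bf x}) J(\eta({\bf x}))\, d{\bf x}$ with $E_H[J(\eta)] = S_{Refinement}$ via the law of the unconscious statistician (the paper phrases this as the ``expected value of a function of a random variable theorem'' with the substitution $z = \eta({\bf x})$). Your added remarks about calibration justifying the use of (\ref{eq:RefOrigCali}) and $P_H$ being the pushforward of $P_{\bf X}$ make explicit what the paper leaves implicit, but the argument is the same.
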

\begin{proof}
Noting that $C_\phi^*(\eta)=-J(\eta)$ and Lemma \ref{thm:MinRiskTermForm} we have
\begin{eqnarray}
&R_{\phi}({p^*})&=\int_{\bf x} P_{{\bf X}}({\bf x})  C_\phi^*(\eta({\bf x} ) ) d{\bf x} \\
&=& - \int_{\bf x} P_{{\bf X}}({\bf x})  J(\eta({\bf x} ) ) d{\bf x} \\
&=& -E_{X}[J(\eta({\bf x} ))].
\end{eqnarray}
Denoting $J(\eta({\bf x} ))=q({\bf x})$, $\eta({\bf x})=z$ and using the  expected value of a function  of a random variable theorem  we continue to write
\begin{eqnarray}
-E_{X}[J(\eta({\bf x} ))]& = &-E_{X}[q({\bf x})] = -E_{Z}[q({\bf x})]  
= -E_{Z}[J(z)] = - S_{Refinement}.
\end{eqnarray}
%which completes the proof.
\end{proof}

Finally, we can now readily prove that the Boosted ensemble forecaster has improved refinement in the following theorem.
\begin{Thm}
\label{thm:EnsembleRefinement}
Let $\eta_e(\{ {\bf x}^{(1)}, ..., {\bf x}^{(N)} \})$ be the ensemble forecaster found from Boosting and $\eta({\bf x}^{(i)})$ be any individual forecaster. Then, the refinement of the ensemble forecaster is greater than or equal to the refinement of any of the individual forecasters and we can write
\begin{eqnarray}
S_{Refinement}(\eta_e(\{ {\bf x}^{(1)}, ..., {\bf x}^{(N)} \})) \ge S_{Refinement}(\eta({\bf x}^{(i)}))  ~~\forall i .
\end{eqnarray}
\end{Thm}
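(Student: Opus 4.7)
The plan is to reduce the refinement inequality to a risk inequality using Lemma~2, and then invoke the risk-minimizing property of Boosting to dispose of the risk inequality.

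First, by Lemma~\ref{thm:MinRiskequalRefinement}, for any calibrated forecaster $\eta$ with associated optimal predictor $p$ minimizing a proper loss $\phi$, we have $R_\phi(p) = -S_{Refinement}(\eta)$. Since Section~6 already argues that Boosting with the exponential loss produces a calibrated predictor (because the exponential loss is proper with invertible link $f^*(\eta)=\tfrac{1}{2}\log\tfrac{\eta}{1-\eta}$), I can apply this lemma both to the ensemble forecaster $\eta_e$ and to each individual forecaster $\eta({\bf x}^{(i)})$. Thus the theorem reduces to proving the reverse risk inequality
\begin{equation*}
R_\phi(p_e) \le R_\phi(p({\bf x}^{(i)})) \quad \forall i,
\end{equation*}
where $p_e = \sum_{i=1}^{M} \alpha^{(i)} p({\bf x}^{(i)})$ is the Boosting ensemble predictor and $p({\bf x}^{(i)}) = f^*(\eta({\bf x}^{(i)}))$ is the individual point forecaster.

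Second, I would establish this risk inequality by the standard greedy-descent argument behind Boosting. The additive model class $\{\sum_i \alpha^{(i)} p({\bf x}^{(i)})\}$ contains, as a degenerate element, every individual predictor $p({\bf x}^{(j)})$ (take $\alpha^{(j)}=1$ and all other coefficients zero). At iteration $m=1$ of AdaBoost/RealBoost the algorithm selects, with uniform weights, the single predictor and coefficient that minimize the exponential risk; this already yields risk no larger than that of any individual $p({\bf x}^{(j)})$. Every subsequent iteration adds a term chosen by the same greedy minimization, which by construction can only decrease (or leave unchanged) the exponential risk of the partial sum. Hence the final ensemble satisfies $R_\phi(p_e) \le R_\phi(p({\bf x}^{(i)}))$ for every $i$.

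Third, combining the two steps finishes the proof: the risk inequality flips sign under the identification $R_\phi(p) = -S_{Refinement}(\eta)$, giving
\begin{equation*}
S_{Refinement}(\eta_e) = -R_\phi(p_e) \ge -R_\phi(p({\bf x}^{(i)})) = S_{Refinement}(\eta({\bf x}^{(i)})),
\end{equation*}
which is exactly the claim.

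The main obstacle is the second step. While the monotone-descent property of Boosting is folklore, stating it cleanly requires invoking calibration so that Lemma~\ref{thm:MinRiskequalRefinement} applies to \emph{all} forecasters in sight (individual and ensemble), and it requires observing that the class of additive combinations contains the individual predictors. In the write-up I would be careful to note that we are comparing population risks, not merely empirical risks, so I would lean on the fact that the exponential loss is proper and that Boosting is known to asymptotically recover the Bayes-optimal predictor within its additive class, ensuring that the risk-minimizing ensemble dominates any single base predictor. Everything else is a routine consequence of the two lemmas already proved.
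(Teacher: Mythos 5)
Your proposal follows essentially the same route as the paper: reduce the refinement inequality to a risk inequality via Lemma~\ref{thm:MinRiskequalRefinement} and then invoke the risk-reducing property of Boosting, with the sign flip giving the result. The only difference is that the paper simply asserts the risk inequality $R(p^*_e) \le R(p^*({\bf x}^{(i)}))$ as a known property of Boosting, whereas you spell out the greedy-descent justification (and correctly flag the empirical-versus-population risk subtlety, which the paper glosses over).
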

\begin{proof}
We know from the Boosting algorithm that the minimum risk associated with the ensemble predictor is less than or equal to the minimum risk of any  individual predictor
so we can write
\begin{eqnarray}
R(p^*_e(\{ {\bf x}^{(1)}, ..., {\bf x}^{(N)} \})) \le R(p^*({\bf x}^{(i)}))  ~~\forall i.
\end{eqnarray}
Using Lemma-\ref{thm:MinRiskequalRefinement} we have
\begin{eqnarray}
-S_{Refinement}(\eta_e(\{ {\bf x}^{(1)}, ..., {\bf x}^{(N)} \})) \le -S_{Refinement}(\eta({\bf x}^{(i)}))  ~~\forall i, 
\end{eqnarray}
from which  the proof follows.
% as
%\begin{eqnarray}
%Refinement(\eta_e(\{ {\bf x}^{(1)}, ..., {\bf x}^{(N)} \})) > Refinement(\eta({\bf x}^{(i)}))  ~~\forall i.
%\end{eqnarray}
\end{proof}

%ADD A SECTION PROVING HOW A BOOSTED FORECASTER HAS INCREASED REFINEMENT WHILE BEING CALIBRATED. USE YOU REFINEMT CONNECTION TO RISK MINIMZATION
%AND THE JMLR PAPER CONECTION TO CALIBRATION TO PROVE THIS IN A SERIES OF THEOREMS AND PROOFS.... YOU CAN ALSO EXPLAIN EXTREMIZATION AND WHY IT WORKS
%BY LOOKING AT THE BRIER SCORE EXAMPLE IN YOUR REFINEMENT REVISTED PAPER AND SEE THAT THE HIGHEST SCORE IS WHEN YOU ARE CALIBRATED AND YOUR ESTIMATES ARE CONCENTRATD AROUND 0 AND 1....artificially extremizing is not a good idea because you destroy calibration ....WE WANT TO PROVE THAT BOOSTING LEADS TO A CALIBRATED MAXIMALY REFINED FORECASTER.
%
%**** THEORETICAL JUSTIFICATION FOR COMBINING USING BOOSTING ****
%**WE PROVE THAT BOOSTING GIVES CALIBRATED FORECASTERS BECAUSE THE EXPONENTIAL LOSS IS A PROPER LOSS, SO THE RESULTING FORECASTER IS CALIBRATED 
%
%**WE PROVE THAT BOOSTING GIVES HIGHLY REFINEMD FORECASTERS BECAUSE THE MIN RISK OF EXP LOSS IS A TIGHT UPPER BOUND ON THE BAYES RISK AND THE RISK IS ITSELF EQUA TO REFINEMENT

\section{Experimental Results}
\label{sec:Exp}
In this section we perform experiments on the Good Judgment Project data set \citep{GoodJudgmentProj}
and compare the different ensemble forecasters learned from Bagging, AdaBoost and RealBoost.
We first discuss the Good Judgment Project  in general and the specifics of the data set that we used.

\subsection{The Good Judgment Project Data Set}
The Good Judgment Project data set consists of a wide variety of mostly economic and political questions such as ``Will OPEC agree to cut its oil output at or before its 27 November 2014 meeting?'' or ``Who will win the April 2014 presidential elections in Afghanistan?'' and a set of predictions provided by different people who were asked to provide a probability forecast for the event. Once the outcome of the question had materialized, the different forecasters were given a score in accordance with their forecasting ability and the winners of the contest were declared. The contest spanned a number of years and forecasters were grouped as untrained, trained, super forecasters (winners of previous contests) and team forecasters consisting of multiple  forecasters. 
In our experiments we only considered the untrained forecaster data set and extracted  $88$ unique binary questions and a total of $338$ forecasters who had answered these questions along with their respective outcomes.  

\subsection{Comparing Different Forecasters}
In accordance with the model presented in this paper, different forecasters  base their forecasts on the different data they can access  or the different features
they consider. For example, different data might mean  having the previous years OPEC oil output or not. An example for different features could be a case where one forecaster decides to use economic factors while another forecaster might base his or her forecast on social factors alone. Intuitively, a good combination of forecasters could do better than any individual forecaster because a combination of forecasters is taking advantage of all the available information in terms of all the data and features considered by the multiple forecasters.

In our experiments we first established a performance baseline by finding the number of errors for each individual forecaster and then finding the average error over all forecasters and the best individual forecaster with smallest error. The error for each forecaster was computed by considering the probability forecast provided by the forecaster for each question. If the probability for an event provided by the forecaster was above $0.5$ and the event materialized then there was no error and if the event did not materialize then there was an error. This is basically implementing the Bayes decision rule. An error was also counted if a forecaster failed to provide a probability. In this setting the highest possible error for any forecaster is $88$ meaning that the forecaster failed to correctly predict all $88$ questions. Under this setting, the average error over all forecasters was
$76.11$ and the best individual forecaster had an error of $30$. One reason for such higher than chance errors is that many forecasters simply failed to provide a probability forecast for most of the questions.

Next, we considered combining the multiple forecasters to form an ensemble forecaster using Bagging.  Under this setting, the ensemble forecaster's prediction for each question is simply  the average of the probabilities reported by all forecasters for that question. If an individual  forecaster did not provide a probability for a specific question, a forecast of $0.5$ was assumed. If the  ensemble forecaster's probability forecast was above $0.5$ and the event materialized then there was no error and if the event did not materialize then there was an error. The Bagging ensemble forecaster had an error of $9$ meaning that the Bagging ensemble forecaster made a wrong prediction on only $9$ out of the $88$ questions. This is a significant improvement over any individual forecaster which   had at best an error of $30$. 

Finally, we considered  combining the multiple forecasters to form an ensemble forecaster using Boosting. We trained the Boosting ensemble forecaster using a leave one out approach. Specifically, we trained a Boosting ensemble forecaster for each question while using the remaining $87$ questions as training data.  In other words, we assume that we have access to the probability forecast provided by each forecaster on the $87$ previous questions. We will use the Boosting algorithm to find the 
best combination of forecasters based on their previous performance history.  We made the ensemble classifier using both the AdaBoost and RealBoost algorithms. 
In the AdaBoost algorithm we converted the probabilities provided by the forecasters into the binary classifier predictors required by the AdaBoost algorithm using the following link function
\begin{equation}
p({\bf x})=sign(2\eta({\bf x})-1).
\end{equation}
Also, if a forecaster did not provide a probability for a certain question, a random probability was assigned to it in the AdaBoost algorithm.
In the RealBoost Algorithm we converted the probabilities provided by the forecasters into the real classifier predictors required by the RealBoost algorthim using the following link function
\begin{equation}
p({\bf x})=\frac{1}{2}\log\left(\frac{\eta({\bf x})}{1-\eta({\bf x})}\right).
\end{equation}
Also, if a forecaster did not provide a probability for a certain question, a  probability of $0.5$ was assigned to it in the RealBoost algorithm.
The AdaBoost ensemble forecaster was trained for $800$ iteration and had an error of $7$. 
The RealBoost ensemble forecaster was trained for only $70$ iterations and had an error of $6$.   

We summarize all of the above results in Table \ref{tab:PredError}. In general, combining the forecasters using either Bagging or Boosting will significantly improve the results. Among the different ensemble methods, RealBoost has the best performance followed by AdaBoost. Bagging has slightly more errors when compared to the Boosting methods, but has the benefit of being very simple to train and does not require having previously acquired a performance history  for the forecasters.  
We can also compare the ensemble methods in terms of the number of forecasters needed to form a prediction. Bagging requires that all $338$ forecasters provide
a forecast since its forecast is the average of all the forecasts. Boosting on the other hand has the benefit of not requiring that all forecasters participate
and actually discards the under-performing forecasters. AdaBoost only requires an average of about $191$ unique forecasters and discards the other $147$ forecasters. Interestingly, RealBoost only requires an average of about $26$ unique forecasters and discards the other $312$ forecasters. In other words, among the $338$ individual forecasters, an ensemble of only $26$ of the best forecasters that compliment each other are found by the RealBoost algorithm and have the best predictive performance
with only $6$ prediction errors.

\begin{table*}[t]
\caption{Prediction error for different methods.  }
\centering
%\resizebox{\textwidth}{!}{ 
\begin{tabular}{|c|c|c|c|  }
    \hline 
    Method & Prediction Error & Avg. \# of Unique Forecasters Used   \\
    \hline
		\hline
		Best Individual Forecaster & 30 & 1 \\
    \hline
		Bagging & 9 & 338 \\
    \hline
		AdaBoost & 7 & 191 \\
    \hline
		RealBoost & {\bf 6} & {\bf 26} \\
    \hline
\end{tabular}
%}
\label{tab:PredError}
\end{table*}

%Note that the benefits of bagging is that no trainin data but requires all $338$ forecasters. Bososoting needs train data but only requires a fraction of the
%forecasters in practice and discards the underperforming forecasters. Note that real boost required $70$ iteration but only used XX number of uniqure froecasters
%Adaboost required $800$ iterations and only required XXX number of individual forecasters.

\section{Conclusion}
\label{sec:conc}
In this work, we considered the problem of combining forecasters and showed that ensemble learning methods can be readily adapted for this purpose. This opens up a host of methods for theoretically analyzing and practically designing strategies for combining forecasters. For example, we considered the Bagging and Boosting ensemble learning algorithms and adapted them for combining forecasters. The Bagging algorithm when adapted for combining forecasters, reduced to the simple method of averaging forecasts and its analysis provided theoretical insight into the averaging methods success. Bagging does not require training data in the form of a forecaster’s performance history but leads to a combined forecaster with weaker performance guarantees and requires the combination of a larger number of individual forecasters. Boosting takes advantage of the training data in the form of the individual forecaster’s performance history and provides strong theoretical guarantees in terms of calibration and high refinement. An added benefit is that it discards the weak individual forecasters and builds a nonlinear combination of forecasters using a small number of the best individual forecasters available. In any case, this paper serves as a gateway for inviting a multitude of well established and powerful methods of ensemble learning into the world of combining forecasters.

\bibliographystyle{Chicago}
\bibliography{IEEEexample}

\begin{thebibliography}{}

\bibitem[\protect\citeauthoryear{Baars and Mass}{Baars and Mass}{2005}]{Baars}
Baars, J.~A. and C.~F. Mass (2005).
\newblock Performance of national weather service forecasts compared to
  operational, consensus, and weighted model output statistics.
\newblock {\em Weather and Forecasting\/}~{\em 20}, 1034–1047.

\bibitem[\protect\citeauthoryear{Bartlett, Jordan, and McAuliffe}{Bartlett
  et~al.}{2006}]{JordanBartlett}
Bartlett, P., M.~Jordan, and J.~D. McAuliffe (2006).
\newblock Convexity, classification, and risk bounds.
\newblock {\em JASA\/}.

\bibitem[\protect\citeauthoryear{Breiman}{Breiman}{1996}]{Bagging}
Breiman, L. (1996).
\newblock Bagging predictors.
\newblock {\em Machine Learning\/}~{\em 24\/}(2), 123--140.

\bibitem[\protect\citeauthoryear{Brocker}{Brocker}{2009}]{Brocker2009}
Brocker, J. (2009).
\newblock Reliability, sufficiency, and the decomposition of proper scores.
\newblock {\em Quarterly Journal of the Royal Meteorological Society\/}~{\em
  135\/}(643), 1512--1519.

\bibitem[\protect\citeauthoryear{Buja, Stuetzle, and Shen}{Buja
  et~al.}{2005}]{Buja}
Buja, A., W.~Stuetzle, and Y.~Shen (2005).
\newblock Loss functions for binary class probability estimation and
  classification: Structure and applications.
\newblock {\em (Technical Report) University of Pennsylvania\/}.

\bibitem[\protect\citeauthoryear{Clements and Harvey}{Clements and
  Harvey}{2011}]{Clements}
Clements, M.~P. and D.~I. Harvey (2011).
\newblock Combining probability forecasts.
\newblock {\em International Journal of Forecasting\/}~{\em 27\/}(2), 208--223.

\bibitem[\protect\citeauthoryear{Croushore}{Croushore}{1993}]{Croushore}
Croushore, D.~D. (1993).
\newblock Introducing: The survey of professional forecasters.
\newblock {\em Federal Reserve Bank of Philadelphia Business Review\/}~{\em
  November/December}, 3--13.

\bibitem[\protect\citeauthoryear{DeGroot and Fienberg}{DeGroot and
  Fienberg}{1983}]{DeGroot}
DeGroot, M.~H. and S.~E. Fienberg (1983).
\newblock The comparison and evaluation of forecasters.
\newblock {\em The Statistician\/}~{\em 32}, 14--22.

\bibitem[\protect\citeauthoryear{Devroye, Györfi, and Lugosi}{Devroye
  et~al.}{1997}]{book:ProbPatRec}
Devroye, L., L.~Györfi, and G.~Lugosi (1997).
\newblock {\em A Probabilistic Theory of Pattern Recognition}.
\newblock New York: Springer.

\bibitem[\protect\citeauthoryear{Freund and Schapire}{Freund and
  Schapire}{1997}]{freund}
Freund, Y. and R.~Schapire (1997).
\newblock A decision-theoretic generalization of on-line learning and an
  application to boosting.
\newblock {\em Journal of Computer and System Sciences\/}~{\em 55}, 119–139.

\bibitem[\protect\citeauthoryear{Friedman, Hastie, and Tibshirani}{Friedman
  et~al.}{2000}]{friedman}
Friedman, J., T.~Hastie, and R.~Tibshirani (2000).
\newblock Additive logistic regression: A statistical view of boosting.
\newblock {\em Annals of Statistics\/}~{\em 28}, 337--407.

\bibitem[\protect\citeauthoryear{Gneiting, Balabdaoui, and Raftery}{Gneiting
  et~al.}{2007}]{Tilmann2007}
Gneiting, T., F.~Balabdaoui, and A.~E. Raftery (2007).
\newblock Probabilistic forecasts, calibration and sharpness.
\newblock {\em Journal of the Royal Statistical Society Series B\/}, 243--268.

\bibitem[\protect\citeauthoryear{Gneiting and Raftery}{Gneiting and
  Raftery}{2007}]{Raftery}
Gneiting, T. and A.~Raftery (2007).
\newblock Strictly proper scoring rules, prediction, and estimation.
\newblock {\em Journal of the American Statistical Association\/}~{\em 102},
  359–--378.

\bibitem[\protect\citeauthoryear{Graham}{Graham}{1996}]{Graham}
Graham, J.~R. (1996).
\newblock Is a group of economists better than one? than none?
\newblock {\em Journal of Business\/}~{\em 69}, 193–232.

\bibitem[\protect\citeauthoryear{Jaynes and Bretthorst}{Jaynes and
  Bretthorst}{2003}]{book:Jaynes}
Jaynes, E.~T. and G.~L. Bretthorst (2003).
\newblock {\em Probability Theory: The Logic of Science}.
\newblock Cambridge: Cambridge University Press.

\bibitem[\protect\citeauthoryear{Masnadi-Shirazi and
  Vasconcelos}{Masnadi-Shirazi and Vasconcelos}{2008}]{HamedNunoLossDesign}
Masnadi-Shirazi, H. and N.~Vasconcelos (2008).
\newblock On the design of loss functions for classification: theory,
  robustness to outliers, and savageboost.
\newblock In {\em Advances in Neural Information Processing Systems}, pp.\
  1049--1056. MIT Press.

\bibitem[\protect\citeauthoryear{Masnadi-Shirazi and
  Vasconcelos}{Masnadi-Shirazi and Vasconcelos}{2015}]{HamedNunoJMLRRegularize}
Masnadi-Shirazi, H. and N.~Vasconcelos (2015).
\newblock A view of margin losses as regularizers of probability estimates.
\newblock {\em The Journal of Machine Learning Research\/}~{\em 16},
  2751--2795.

\bibitem[\protect\citeauthoryear{Niculescu-Mizil and Caruana}{Niculescu-Mizil
  and Caruana}{2005}]{Caruana}
Niculescu-Mizil, A. and R.~Caruana (2005).
\newblock Obtaining calibrated probabilities from boosting.
\newblock In {\em Uncertainty in Artificial Intelligence}.

\bibitem[\protect\citeauthoryear{Platt}{Platt}{2000}]{Platt}
Platt, J. (2000).
\newblock Probabilistic outputs for support vector machines and comparison to
  regularized likelihood methods.
\newblock In {\em Adv. in Large Margin Classifiers}, pp.\  61--74.

\bibitem[\protect\citeauthoryear{Ranjan and Gneiting}{Ranjan and
  Gneiting}{2010}]{Ranjan}
Ranjan, R. and T.~Gneiting (2010).
\newblock Combining probability forecasts.
\newblock {\em Journal of the Royal Statistical Society B\/}~{\em 72}, 71--91.

\bibitem[\protect\citeauthoryear{Reid and Williamson}{Reid and
  Williamson}{2010}]{Reid}
Reid, M. and R.~Williamson (2010).
\newblock Composite binary losses.
\newblock {\em The Journal of Machine Learning Research\/}~{\em 11},
  2387--2422.

\bibitem[\protect\citeauthoryear{Satopää, Pemantle, and Ungar}{Satopää
  et~al.}{2016}]{Satopaa1}
Satopää, V.~A., R.~Pemantle, and L.~H. Ungar (2016).
\newblock Modeling probability forecasts via information diversity.
\newblock {\em Journal of the American Statistical Association\/}~{\em
  111\/}(516), 1623--1633.

\bibitem[\protect\citeauthoryear{Savage}{Savage}{1971}]{Savage}
Savage, L.~J. (1971).
\newblock The elicitation of personal probabilities and expectations.
\newblock {\em Journal of The American Statistical Association\/}~{\em 66},
  783--801.

\bibitem[\protect\citeauthoryear{Timmermann}{Timmermann}{2006}]{Timmermann1}
Timmermann, A. (2006).
\newblock Forecast combinations.
\newblock {\em Handbook of economic forecasting\/}~{\em 1}, 135--196.

\bibitem[\protect\citeauthoryear{Ungar, Mellers, Satopää, Tetlock, and
  Baron}{Ungar et~al.}{2012}]{GoodJudgmentProj}
Ungar, L.~H., B.~A. Mellers, V.~Satopää, P.~Tetlock, and J.~Baron (2012).
\newblock The good judgment project: A large scale test of different methods of
  combining expert predictions.
\newblock In {\em Association for the Advancement of Artificial Intelligence
  2012 Fall Symposium Series}. Univ. Pennsylvania, Philadelphia, PA.

\bibitem[\protect\citeauthoryear{Winkler and Poses}{Winkler and
  Poses}{1993}]{Winkler}
Winkler, R.~L. and R.~M. Poses (1993).
\newblock Evaluating and combining physicians' probabilities of survival in an
  intensive care unit.
\newblock {\em Management Science\/}~{\em 39}, 1526–1543.

\bibitem[\protect\citeauthoryear{Zhang}{Zhang}{2004}]{zhang}
Zhang, T. (2004).
\newblock Statistical behavior and consistency of classification methods based
  on convex risk minimization.
\newblock {\em Annals of Statistics\/}~{\em 32}, 56--85.

\end{thebibliography}
\end{document}